\documentclass[a4paper]{article}
\usepackage{pdfsync}
\usepackage[latin1]{inputenc}
\usepackage{graphicx}
\usepackage{amssymb,amsfonts,amsmath}
\usepackage{theorem}
\usepackage{hyperref}
\usepackage{graphicx}
\usepackage{multirow}
\usepackage{verbatim}

\usepackage{color}

\addtolength{\voffset}{-1cm} \addtolength{\textheight}{1.5cm}
\addtolength{\hoffset}{-1cm} \addtolength{\textwidth}{2cm}

\newtheorem{theorem}{Theorem}[section]

\newtheorem{lemma}{Lemma}[section]

{\theorembodyfont{\rmfamily}

\newtheorem{remark}{Remark}[section]

}
\newenvironment{proof}[1][Proof]{\noindent\textbf{#1.} }{\newline \hspace*{\textwidth}\hspace*{-0,4cm} \rule{0.5em}{0.5em} \vspace{0,2cm}}

\begin{document}

\title{\textbf{Kinematic relative velocity with respect to stationary observers in Schwarzschild spacetime}}


\author{Vicente J. Bol\'os
\\{\small Dpto. Matem\'aticas para la Econom\'{\i}a y la Empresa, Facultad de Econom\'{\i}a,}\\ {\small Universidad de Valencia. Avda. Tarongers s/n. 46022, Valencia, Spain.}\\ {\small e-mail\textup{: \texttt{vicente.bolos@uv.es}}}}


\maketitle

\begin{abstract}
We study the kinematic relative velocity of general test particles with respect to stationary observers (using spherical coordinates) in Schwarzschild spacetime, obtaining that its modulus does not depend on the observer, unlike Fermi, spectroscopic and astrometric relative velocities. We study some fundamental particular cases, generalizing some results given in other work about stationary and radial free-falling test particles. Moreover, we give a new result about test particles with circular geodesic orbits: the modulus of their kinematic relative velocity with respect to any stationary observer depends only on the radius of the circular orbit, and so, it remains constant.
\end{abstract}



\section{Introduction}

The concept of ``relative velocity'' of a test particle with respect to an observer in general relativity is only well defined when the observer and the test particle are in the same event. Nevertheless, the notion of relative velocity of a distant test particle is fundamental in physics and so, it was revised by the IAU using reference systems adapted to the solar system (see \cite{Soff03,Lind03}). Thereby, some authors have introduced new geometric concepts motivated by the coordinate-dependence of some definitions; for example, scaled Fermi-Walker derivatives let us define geometrically local notions of velocities of a test particle with respect to a congruence of observers (see \cite{Ca06}).
Moreover, four different intrinsic geometric definitions of relative velocity of a distant test particle with respect to a single observer were introduced in \cite{Bol07}. These definitions are strongly associated with the concept of simultaneity: \textit{kinematic} and \textit{Fermi} in the framework of ``spacelike simultaneity'', \textit{spectroscopic} and \textit{astrometric} in the framework of ``lightlike simultaneity''. These four concepts each have full physical sense, and have proved to be useful in the study of properties of particular spacetimes (see \cite{Bol07,KC10,KR11,BK12}).

Following this line, we are going to study the kinematic relative velocity of test particles with respect to stationary observers in Schwarzschild spacetime. This velocity shows a kind of ``Newtonian behavior'' in this spacetime unlike the other three velocities, and some interesting properties about stationary observers hold, as we are going to develop in the present work.

This paper is organized as follows. In Section \ref{sec2}, we establish notation and define the concept of kinematic relative velocity. In Section \ref{sec3} we introduce the Schwarzschild metric in spherical coordinates and their corresponding stationary observers, giving some lemmas that are applied in Section \ref{sec4} to obtain the main result: the expression of the modulus of the kinematic relative velocity of a general test particle with respect to any stationary observer. We also study in this section some fundamental examples that were previously introduced in \cite{Bol07}, generalizing the results obtained in that paper, and we present a new example about circular geodesic orbits. Finally, Section \ref{sec5} gives concluding remarks that lead to a property that extends to general relativity the main result.

\section{Definitions and notation}
\label{sec2}


We work in a Lorentzian spacetime manifold $\left( \mathcal{M},g\right) $, with $c=1$ and using the `mostly plus' signature convention $(-,+,+,+)$. We suppose that $\mathcal{M}$ is a convex normal neighborhood; thus, given two events $p$ and $q$ in $\mathcal{M}$, there exists a unique geodesic joining them (results on the existence of convex normal neighborhoods in semi-Riemannian manifolds are given in \cite{ON83}, pp. 129--131; see Remark \ref{rem1} for a discussion about working in a non convex normal neighborhood). The parallel transport from $q$ to $p$ along this geodesic is denoted by $\tau _{qp}$. If $\beta :I\rightarrow \mathcal{M}$ is a curve with $I\subseteq \mathbb{R}$ a real interval, we identify $\beta $ with the image $\beta I$ (that is a subset in $\mathcal{M}$), in order to simplify the notation. Vector fields are denoted by uppercase letters and vectors (defined at a single point) are denoted by lowercase letters. Moreover, if $x$ is a spacelike vector, then $\Vert x\Vert :=g\left( x,x\right) ^{1/2}$ is the modulus of $x$. If $X$ is a vector field, $X_p$ denotes the unique vector of $X$ in $T_p\mathcal{M}$.

In general, we say that a timelike world line $\beta $ is an \textit{observer} (or a \textit{test particle}). Nevertheless, we say that a future-pointing timelike unit vector $u$ in $T_{p}\mathcal{M}$ is an \textit{observer at $p$}, identifying the observer with its 4-velocity.

The \textit{Landau submanifold} $L_{p,u}$ (also called \textit{Fermi surface}) is given by all the geodesics starting from $p$ and orthogonal to $u$ (see \cite{Ferm22,Bol02,KR11}).

\subsection{Kinematic relative velocity}
\label{sec2.1}


Throughout the paper, we consider an observer $\beta $ and a test particle $\beta '$ (parameterized by their proper times) with 4-velocities $U$ and $U'$ respectively.
Let $u:=U_p$ be the 4-velocity of $\beta $ at an event $p$ and let $q$ be the event of $\beta '$ such that there exists a spacelike geodesic $\psi $ orthogonal to $u$ joining $p$ and $q$ (see Figure \ref{diagram}). Note that since we work in a convex normal neighborhood, this event is unique and it is given by $q:=L_{p,u}\cap \beta '$. We denote $u':=U'_q$ in order to simplify the notation.

\begin{figure}[tbp]
\begin{center}
\includegraphics[width=0.35\textwidth]{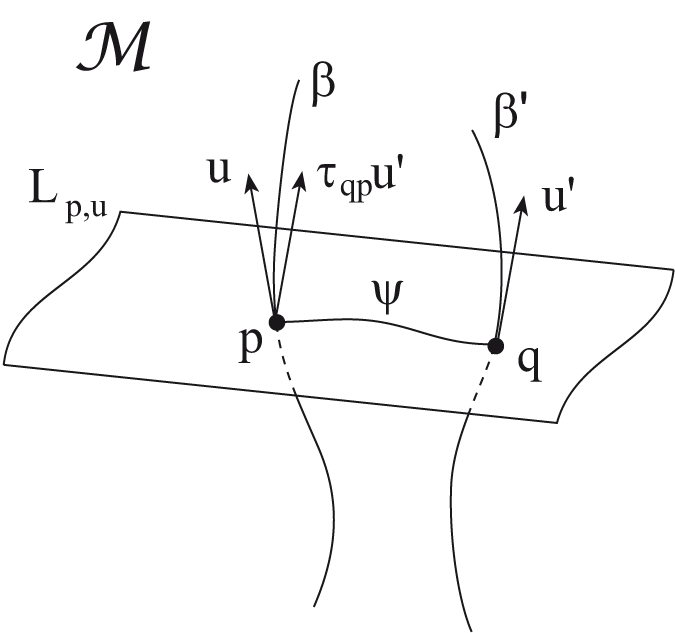}
\end{center}
\caption{Scheme of the elements involved in the definition of the kinematic relative velocity of $\beta '$ (test particle) with respect to $\beta $ (observer). The curve $\psi $ is a spacelike geodesic orthogonal to the 4-velocity of $\beta $ at $p$, denoted by $u$. The vector $u'$ is the 4-velocity of $\beta '$ at $q$.} \label{diagram}
\end{figure}

The \textit{kinematic relative velocity of $u'$ with respect to $u$} is the vector
\begin{equation}
\label{vkin}
v_{\mathrm{kin}}:=\frac{1}{-g\left( \tau _{qp}u',u\right) }\tau _{qp}u'-u.
\end{equation}
In the case $p=q$, this definition coincides with the usual concept of relative velocity
\begin{equation}
\label{vusual}
v=\frac{1}{-g\left( u',u\right) }u'-u,
\end{equation}
which is only defined when $u$ and $u'$ are in the same tangent space.

Note that $v_{\mathrm{kin}}$ is spacelike and orthogonal to $u$, and the square of its modulus is given by
\begin{equation}
\label{vkinmod}
\| v_{\mathrm{kin}}\| ^2=g\left( v_{\mathrm{kin}},v_{\mathrm{kin}}\right) =1-\frac{1}{g\left( \tau _{qp}u',u\right) ^2}=1-\frac{1}{g\left( u',\tau _{pq}u\right) ^2},
\end{equation}
since parallel transport conserves the metric.
Varying $p$ along $\beta $, we construct the vector field $V_{\mathrm{kin}}$ defined on $\beta $, representing the \textit{kinematic relative velocity of $\beta '$ with respect to $\beta $} (see \cite{Bol07,Bol12}).
Throughout the paper we are going to denote $v_{\mathrm{kin}}:=V_{\mathrm{kin}\, p}$ as we have already done in this section.

\begin{remark}
\label{rem1}
The concept of kinematic relative velocity can be extended to non convex normal neighborhoods. If $p$ is an event of the observer $\beta $ with 4-velocity $u$, $q$ is an event of the test particle $\beta '$ with 4-velocity $u'$, and $\psi $ is a spacelike geodesic segment orthogonal to $u$ joining $p$ and $q$, expression \eqref{vkin} defines a kinematic relative velocity of $u'$ with respect to $u$, where $\tau _{qp}$ represents the parallel transport from $q$ to $p$ along $\psi $; in this case, there exists a unique $v_{\mathrm{kin}}$ associated to the set $\left\{ p,q,\psi \right\} $. Working in a convex normal neighborhood implies that given $p$, there exists a unique set $\left\{ p,q,\psi \right\} $ satisfying the above conditions, and so there is a unique $v_{\mathrm{kin}}$ associated to this $p$. But if we work in a non convex normal neighborhood, we could find different sets $\left\{ p,q,\psi '\right\} $, $\left\{ p,q',\psi ''\right\} $, $\ldots $ satisfying the above conditions, and hence, there would be a different $v_{\mathrm{kin}}$ for each set. This extension can be also done for the other concepts of relative velocity and, for example, in the framework of lightlike simultaneity, if there is gravitational lensing then each image of the observed object has a different spectroscopic and astrometric relative velocity.
\end{remark}

\section{Stationary observers in Schwarzschild spacetime}
\label{sec3}

The Schwarzschild metric in spherical coordinates $\left\{ t,r,\theta ,\varphi \right\} $ is given by the line element
\begin{equation}
\label{eq:sch}
ds^2 = -\left( 1-\frac{2m}{r}\right) \mathrm{d}t^2+\left( 1-\frac{2m}{r}\right) ^{-1}\mathrm{d}r^2+r^2\left( \mathrm{d}\theta ^2+\sin ^2\theta \mathrm{d}\varphi ^2 \right) ,
\end{equation}
where the parameter $m$ is interpreted as the mass of the gravitating object, $r>2m$ is the radial coordinate, and $0<\theta <\pi$. From now on we are going to suppose that the coordinates hold these restrictions.

In the framework of this coordinate system, a \textit{stationary observer} is an observer with constant spatial coordinates and its 4-velocity is given by
\begin{equation}
\label{eq:4velsta}
U=\left( 1-\frac{2m}{r}\right) ^{-1/2}\frac{\partial }{\partial t}=\left( \left( 1-\frac{2m}{r}\right) ^{-1/2},0,0,0\right) .
\end{equation}
Note that stationary observers are not geodesic, but they are useful in the description and interpretation of the Schwarzschild spacetime because the vector field $\frac{\partial }{\partial t}$ is Killing. Moreover, we can consider \textit{asymptotic exterior observers} making $r\rightarrow +\infty $ from stationary observers.

Next, we are going to give some lemmas that will be applied in the next section.

\begin{lemma}
\label{lemma1}
Let $p$ be an event of a stationary observer, and let $u=U_p$ be its 4-velocity at $p$, see \eqref{eq:4velsta}. Then, the Landau submanifold $L_{p,u}$ is contained in the hypersurface $t=t_0$, where $t_0$ is the coordinate time of $p$.
\end{lemma}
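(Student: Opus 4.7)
The plan is to use the Killing symmetry of $\partial/\partial t$ in Schwarzschild spacetime together with the orthogonality condition defining the Landau submanifold.

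First, I would characterize the initial tangent vectors of the geodesics generating $L_{p,u}$. Since $u$ from \eqref{eq:4velsta} has only a nonzero $t$-component and the metric \eqref{eq:sch} is diagonal, the condition $g(u,v)=0$ for a vector $v\in T_p\mathcal{M}$ reduces to $-(1-2m/r)(1-2m/r)^{-1/2}v^t=0$, which (since $r>2m$) forces $v^t=0$. Hence every geodesic $\gamma$ generating $L_{p,u}$ starts at $p$ with $\dot{t}(0)=0$.

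Next, I would invoke the fact that $\partial/\partial t$ is a Killing vector field of the Schwarzschild metric, as noted in the paragraph preceding Lemma \ref{lemma1}. This yields the well-known conserved quantity along any geodesic $\gamma$,
\begin{equation*}
E:=-g\!\left(\frac{\partial}{\partial t},\dot\gamma\right)=\left(1-\frac{2m}{r}\right)\dot{t}.
\end{equation*}
Evaluating at the initial point $p$ gives $E=0$ since $\dot{t}(0)=0$. Because $1-2m/r\neq0$ throughout the region under consideration, conservation of $E$ forces $\dot{t}\equiv0$ along the entire geodesic, and therefore $t$ remains equal to its initial value $t_0$. This shows that every generating geodesic of $L_{p,u}$ lies in the hypersurface $t=t_0$, proving the lemma.

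I do not expect a genuine obstacle here: the argument is a one-line application of a Killing conservation law, and the only place to be slightly careful is to verify that the orthogonality-to-$u$ condition translates cleanly into $v^t=0$, which works precisely because the metric is diagonal in the coordinates $\{t,r,\theta,\varphi\}$ and $u$ points purely in the $\partial/\partial t$ direction.
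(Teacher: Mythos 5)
Your proposal is correct and follows essentially the same route as the paper: both arguments rest on the conservation of $g(\dot\gamma,\partial/\partial t)$ along geodesics (because $\partial/\partial t$ is Killing), the vanishing of this quantity at $p$ due to orthogonality with $u$, and the diagonality of the metric to conclude $\dot t\equiv 0$. The only difference is cosmetic ordering of when the diagonality is invoked, so there is nothing to flag.
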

\begin{proof}
The Landau submanifold is formed by all the spacelike geodesics starting from $p$ and orthogonal to $u$ at $p$; let $\psi $ be one of these geodesics (affinely parameterized) and let $S$ be its tangent vector field. Since the vector field $\frac{\partial }{\partial t}$ is Killing and taking into account \eqref{eq:4velsta}, we have
\begin{equation}
\label{eq:gvpt0}
g\left( S,\frac{\partial }{\partial t}\right) =g\left( S_p,\left. \frac{\partial }{\partial t}\right| _p\right) =\left( 1-\frac{2m}{r_0}\right) ^{1/2} g\left( S_p,u\right) =0,
\end{equation}
where $r_0$ is the radial coordinate of $p$, and $S_p$ is orthogonal to $u$. Hence, from \eqref{eq:gvpt0} and taking into account that the metric \eqref{eq:sch} is diagonal, we have
that $S^t = 0$, and so $\psi ^t$ is constant, concluding that $\psi $ is in the hypersurface $t=t_0$.
\end{proof}


\begin{lemma}
\label{lemma2}
The vector field $U$ given by \eqref{eq:4velsta} is parallel transported along curves in hypersurfaces of the form $t=\mathrm{constant}$, i.e. curves with constant time component.
\end{lemma}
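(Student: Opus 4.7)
The goal is to show that if $\gamma$ is a curve in Schwarzschild spacetime with tangent vector $T$ satisfying $T^t=0$, then $\nabla_T U=0$, where $U=N\,\partial/\partial t$ with $N:=(1-2m/r)^{-1/2}$. My plan is a direct coordinate computation exploiting the Leibniz rule: writing
\[
\nabla_T U=T(N)\,\frac{\partial }{\partial t}+N\,\nabla_T\frac{\partial }{\partial t},
\]
the first term is easy because $N$ depends only on $r$, so $T(N)=T^r\,\partial_r N$, and the second term reduces to the Christoffel symbols $\Gamma^{\nu}_{\mu t}$ contracted against $T^\mu$, with $\mu$ restricted to spatial indices since $T^t=0$.

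First I would list the Christoffel symbols of the metric \eqref{eq:sch} that couple a spatial index $\mu\in\{r,\theta,\varphi\}$ to the time index $t$. A quick inspection of the diagonal form of $g$ shows that the only nonvanishing such symbol is $\Gamma^{t}_{rt}=m/(r^2(1-2m/r))$; in particular $\Gamma^{\nu}_{\theta t}=\Gamma^{\nu}_{\varphi t}=0$ for every $\nu$, and none of $\Gamma^{r}_{\mu t},\Gamma^{\theta}_{\mu t},\Gamma^{\varphi}_{\mu t}$ survive when $\mu\neq t$. Hence
\[
\nabla_T\frac{\partial }{\partial t}=T^\mu\Gamma^{\nu}_{\mu t}\frac{\partial }{\partial x^\nu}=T^r\,\Gamma^{t}_{rt}\frac{\partial }{\partial t}=\frac{m\,T^r}{r^2(1-2m/r)}\frac{\partial }{\partial t}.
\]

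Next I would compute $\partial_r N=-(m/r^2)(1-2m/r)^{-3/2}$ and verify the cancellation:
\[
T(N)+N\,T^r\,\Gamma^{t}_{rt}=T^r\!\left(-\frac{m}{r^2}(1-2m/r)^{-3/2}+(1-2m/r)^{-1/2}\cdot\frac{m}{r^2(1-2m/r)}\right)=0,
\]
so that $\nabla_T U=0$, which is precisely the parallel transport condition along $\gamma$. Because this holds for any $T$ with $T^t=0$, it holds for every curve lying in a hypersurface $t=\mathrm{constant}$.

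Conceptually there is no real obstacle: the computation is a routine Christoffel-symbol check, and the result is no accident; it reflects the fact that $U$ is the unit normal to the static slices $\{t=\mathrm{const}\}$ and that, for a static spacetime with hypersurface-orthogonal Killing vector $\partial/\partial t$, the extrinsic curvature of those slices vanishes. The only place where one needs to be careful is bookkeeping the normalization factor $N$, which is exactly what makes $T(N)$ and $N\,\Gamma^{t}_{rt}T^r$ cancel; this is the key algebraic identity I would highlight in the write-up.
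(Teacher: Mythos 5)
Your proof is correct and follows essentially the same route as the paper's: both reduce the claim to the single nontrivial component of the parallel transport equation, governed by $\Gamma^t_{rt}$, and verify the cancellation between the derivative of the normalization factor $(1-2m/r)^{-1/2}$ and the connection term. Your Leibniz-rule phrasing and the closing remark about the vanishing extrinsic curvature of the static slices are nice additions, but the underlying computation is identical.
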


\begin{proof}
Let $\varphi $ be a curve in a hypersurface of the form $t=\mathrm{constant}$ and let $S$ be its tangent vector field. Hence $S^t=0$, and taking into account the Christoffel symbols of the metric, there is only one nontrivial equation for the parallel transport of $U$ along $\varphi $:
\begin{equation}
\label{eq:l2pt}
\frac{\mathrm{d}U^t}{\mathrm{d}r}S^r + \Gamma ^t_{rt} S^r U^t = 0 \quad \Longrightarrow \quad \left( \frac{\mathrm{d}U^t}{\mathrm{d}r} +\frac{m}{r\left( r-2m\right) } U^t\right) S^r = 0.
\end{equation}
Since $U^t=\left( 1-\frac{2m}{r}\right) ^{-1/2}$, equation \eqref{eq:l2pt} holds for any $S^r$.
\end{proof}

\section{Modulus of the kinematic relative velocity with respect to stationary observers}
\label{sec4}

In this section, we are going to work in the Schwarzschild metric using spherical coordinates. Moreover, we are going to consider a stationary observer containing $p=\left( t_0,r_0,\theta _0,\varphi _0\right) $, whose 4-velocity at $p$ is $u=U_p$, where $U$ is the vector field given by \eqref{eq:4velsta}.

\begin{theorem}
\label{theo0}
Given a test particle with 4-velocity $u'$ at $q=\left( t_0,r_1,\theta _1,\varphi _1\right) $, we have
\begin{equation}
\label{vkinmod2}
\| v_{\mathrm{kin}}\| ^2=\| v\| ^2=1-\frac{r_1}{\left( r_1-2m\right) \left( u'^t\right) ^2},
\end{equation}
where $v_{\textrm{kin}}$ is the kinematic relative velocity of $u'$ with respect to the stationary observer $u$, and $v$ is the usual relative velocity of $u'$ with respect to $U_q$.
\end{theorem}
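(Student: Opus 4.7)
The plan is to reduce everything to the identity \eqref{vkinmod} by producing a spacelike geodesic $\psi$ from $p$ to $q$ orthogonal to $u$, and then to evaluate the inner product $g(u',\tau_{pq}u)$ by showing that $\tau_{pq}u$ is nothing but the stationary 4-velocity $U_q$ at $q$.

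First, since $p$ and $q$ share the same coordinate time $t_0$, Lemma \ref{lemma1} guarantees that the Landau submanifold $L_{p,u}$ lies in the hypersurface $t=t_0$, which contains $q$; hence the unique spacelike geodesic $\psi \subset L_{p,u}$ joining $p$ to $q$ (which exists because we work in a convex normal neighborhood) is entirely contained in $\{t=t_0\}$. Second, by Lemma \ref{lemma2}, the vector field $U$ of stationary observers is parallel transported along any curve lying in a $t=\text{constant}$ hypersurface; applying this to $\psi$ yields $\tau_{pq}u = U_q$.

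With this identification, the computation of \eqref{vkinmod} becomes purely algebraic. Using \eqref{eq:sch} and \eqref{eq:4velsta} at $q$, and the fact that the metric is diagonal in these coordinates,
\begin{equation*}
g(u',\tau_{pq}u) = g(u',U_q) = g_{tt}(q)\, u'^t\, U_q^t = -\left(1-\tfrac{2m}{r_1}\right)^{1/2} u'^t,
\end{equation*}
so
\begin{equation*}
\|v_{\mathrm{kin}}\|^2 = 1 - \frac{1}{g(u',\tau_{pq}u)^2} = 1 - \frac{1}{\left(1-\tfrac{2m}{r_1}\right)(u'^t)^2} = 1 - \frac{r_1}{(r_1-2m)(u'^t)^2}.
\end{equation*}
For the equality $\|v_{\mathrm{kin}}\|^2 = \|v\|^2$, note that $v$ is the relative velocity of $u'$ with respect to $U_q$ in the common tangent space $T_q\mathcal{M}$, so by the same squared-modulus formula applied with $p=q$ (i.e. without parallel transport) we get $\|v\|^2 = 1 - 1/g(u',U_q)^2$, which coincides with the expression above.

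The only nontrivial ingredient is the identification $\tau_{pq}u = U_q$, and that is precisely what Lemmas \ref{lemma1} and \ref{lemma2} were designed to provide; everything else is a direct substitution into \eqref{vkinmod} and \eqref{vusual}. I would expect no real obstacle beyond checking that the Killing/diagonal-metric argument of Lemma \ref{lemma2} indeed covers the tangent directions of a generic geodesic $\psi$ in $\{t=t_0\}$ (it does, since Lemma \ref{lemma2} applies to any curve with $S^t=0$, not just radial ones).
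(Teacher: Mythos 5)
Your proposal is correct and follows exactly the paper's route: Lemmas \ref{lemma1} and \ref{lemma2} give $\tau_{pq}u=U_q$, and then \eqref{vusual}, \eqref{vkinmod} and \eqref{eq:4velsta} yield the formula. The paper's own proof is just a two-line version of this; your explicit evaluation of $g(u',U_q)$ fills in the same substitution the author leaves to the reader.
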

\begin{proof}
Using Lemmas \ref{lemma1} and \ref{lemma2}, we have that $\tau _{pq}u=U_q$. Then, by \eqref{vusual}, \eqref{vkinmod} and \eqref{eq:4velsta} the result holds.
\end{proof}

Since \eqref{vkinmod2} does not depend on $u$, the kinematic relative velocity shows a kind of ``Newtonian behavior'' when it is measured by stationary observers, unlike the other three relative velocities (Fermi, spectroscopic and astrometric), that do not have this behavior in general, (see Figure \ref{velschw}).
Moreover, making $r_0\rightarrow +\infty $ we obtain that Theorem \ref{theo0} also holds for asymptotic exterior observers.

\begin{remark}
Schwarzschild spacetime is not a convex normal neighborhood and then we have to take into account Remark \ref{rem1}. Following the notation of that remark, Lemma \ref{lemma1} assures that given $p\in \beta $ there exists a unique $q\in \beta '$ such that $q\in L_{p,u}$ (because $p$ and $q$ must have the same coordinate time), and viceversa; but there could exist different spacelike geodesics orthogonal to $u$ joining $p$ and $q$, and consequently, different kinematic relative velocities of $u'$ with respect to $u$. Nevertheless, we can conclude from Theorem \ref{theo0} that all of these velocities have the same modulus.
\end{remark}

Next, we are going to study some fundamental particular cases.

\subsection{Stationary test particles}
\label{sec:stp}

Taking into account Theorem \ref{theo0} and \eqref{eq:4velsta}, the kinematic relative velocity of a stationary observer (in the role of a test particle) with respect to any other stationary observer is zero (in \cite{Bol07} it is proved in the particular case of stationary observers aligned with the origin, i.e. with the same $\theta $ and $\varphi $ coordinates). In fact, it can be proved that the Fermi and astrometric relative velocities are also zero, because they are based on changes of relative position. On the other hand, the spectroscopic relative velocity is not zero and produces the gravitational shift (see \cite{Nar94,Bol05,Bol07}). 

\subsection{Radially inward free-falling test particles}

In \cite{Bol07} it is also studied the case of a radially inward free-falling test particle at $r_1$ with respect to a stationary observer at $r_0\geq r_1$ and aligned with the test particle (i.e. with the same $\theta $ and $\varphi $ coordinates). Without loss of generality we can consider that the test particle has $\theta =\pi /2$ (i.e. it is equatorial) and $\varphi =0$ (because it is radial), and so its 4-velocity at $q=\left(t_0,r_1,\pi /2,0\right) $ is given by
\begin{equation}
u'=\left( \frac{E}{1-\frac{2m}{r_1}},-\sqrt{E^2-\left( 1-\frac{2m}{r_1}\right) },0,0 \right), \label{4velfreefall}
\end{equation}
where $E:=\left( \frac{1-2m/r_{\textrm{ini}}}{1-v_{\textrm{ini}}^2}\right) ^{1/2}$ is a constant of motion, $r_{\textrm{ini}}$ is the
radial coordinate at which the fall begins, and $v_{\textrm{ini}}$ is the initial
velocity (see \cite{Craw02}). It was obtained that the square of the modulus of the kinematic relative velocity with respect to an aligned stationary observer is
\begin{equation}
\label{vkinff}
\left\Vert v_{\mathrm{kin}}\right\Vert ^2=1-\frac{1-\frac{2m}{r_1}}{E^2}.
\end{equation}
From Theorem \ref{theo0} and the expression of $u'^t$ given in \eqref{4velfreefall}, we can generalize this result obtaining that \eqref{vkinff} is also valid for any stationary observer not necessarily aligned with the test particle.

\subsection{Test particles with circular geodesic orbits}
\label{sec:circ}

Another important and interesting case is that of a test particle with circular geodesic orbit at radius $r_1>3m$, that we can suppose equatorial (without loss of generality) and whose 4-velocity is then given by
\begin{equation}
\label{eq:4velcirc}
U'=\left( \sqrt{\frac{r_1}{r_1-3m}},0,0,\frac{1}{r_1}\sqrt{\frac{m}{r_1-3m}}\right) .
\end{equation}
This case is very hard to study analytically and only numerical results have been obtained. Nevertheless, taking into account Theorem \ref{theo0} and the expression of $U'^t$ given in \eqref{eq:4velcirc}, the square of the modulus of the kinematic relative velocity of the test particle with respect to any stationary observer is constant and given by
\begin{equation}
\label{vkincirc}
\| V_{\mathrm{kin}}\| ^2=\frac{m}{r_1-2m}.
\end{equation}
Since \eqref{vkincirc} only depends on $r_1$, the behavior of the kinematic relative velocity is even more ``Newtonian'' compared with the other three relative velocities whose moduli are not constant (see Figure \ref{velschw}).

\begin{figure}[tbp]
\begin{center}
\includegraphics[width=0.9\textwidth]{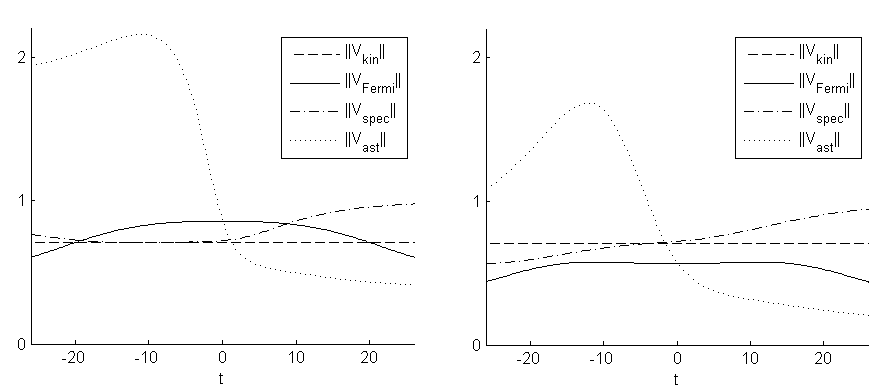}
\end{center}
\caption{Retarded comparison (see \cite{BK12}) of the moduli of the kinematic, Fermi, spectroscopic and astrometric relative velocities of a test particle with equatorial circular geodesic orbit with radius $r_1=4$, $\theta _1=\pi /2$ and $\varphi _1=0$ at $t=0$, with respect to a stationary observer at $r_0=3$ (left) and $r_0=8$ (right), $\theta _0=\pi /2$ and $\varphi _0=0$, in the Schwarzschild metric with $m=1$. The modulus of the kinematic relative velocity (dashed line) remains constant and equal to $\sqrt{1/2}$. They have been numerically computed with a relative error less than $10^{-6}$.} \label{velschw}
\end{figure}

\section{Final remarks}
\label{sec5}

Theorem \ref{theo0} assures that the modulus of the kinematic relative velocity of a test particle with respect to a stationary observer can be computed choosing any stationary observer.
Moreover, stationary observers are \textit{kinematically comoving} (see \cite{Bol07}) between them as it is proved in Section \ref{sec:stp}. In fact, from Theorem \ref{theo1} we obtain the next result that holds in any metric, generalizing Theorem \ref{theo0}: given a congruence of kinematically comoving observers, the modulus of the kinematic relative velocity of a test particle with respect to any observer of the congruence remains constant.

\begin{theorem}
\label{theo1}
Let $U$ be a congruence of kinematically comoving observers. Given a test particle with 4-velocity $u'$ at $q$ and given $p$ such that $q\in L_{p,u}$ (with $u:=U_p$), we have
\begin{equation}
\label{eqtheo1}
\left\Vert v_{\mathrm{kin}}\right\Vert = \left\Vert v\right\Vert ,
\end{equation}
where $v_{\textrm{kin}}$ is the kinematic relative velocity of $u'$ with respect to $u$, and $v$ is the usual relative velocity of $u'$ with respect to $U_q$.
\end{theorem}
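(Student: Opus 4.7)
The plan is to unpack the hypothesis that $U$ is kinematically comoving, turn it into the identity $\tau_{pq}u = U_q$, and then read \eqref{eqtheo1} off the modulus formulas already available.

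First, I would apply the definition \eqref{vkin} to the pair $(u,U_q)$ rather than to $(u,u')$: since $U$ is kinematically comoving, the kinematic relative velocity of $U_q$ with respect to $u$ vanishes, i.e.
\begin{equation*}
\tau_{qp}U_q = -g(\tau_{qp}U_q,u)\,u.
\end{equation*}
Parallel transport along the spacelike geodesic $\psi$ is an isometry, so $\tau_{qp}U_q$ remains unit timelike at $p$. Continuity of parallel transport, together with the fact that the future timelike cone is a connected component of the set of unit timelike vectors at $p$, implies that $\tau_{qp}U_q$ is future-directed. Two future-directed unit timelike vectors that are proportional must coincide, so $\tau_{qp}U_q = u$; equivalently, $\tau_{pq}u = U_q$.

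Second, I would substitute this identity into the rightmost expression in \eqref{vkinmod}, obtaining
\begin{equation*}
\|v_{\mathrm{kin}}\|^2 = 1 - \frac{1}{g(u',\tau_{pq}u)^2} = 1 - \frac{1}{g(u',U_q)^2}.
\end{equation*}
The usual relative velocity $v$ is defined by \eqref{vusual} with $u$ replaced by $U_q$; expanding $g(v,v)$ and using $g(u',u')=g(U_q,U_q)=-1$ yields the identical expression $1-1/g(u',U_q)^2$ for $\|v\|^2$. Taking square roots gives \eqref{eqtheo1}.

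The main technical point is the identification $\tau_{qp}U_q = u$: algebraically the proportionality constant could be $\pm 1$, and one must appeal to time-orientation to rule out the minus sign. This is routine but is the only non-formal ingredient of the argument; the remainder is a direct substitution into \eqref{vkin}, \eqref{vusual} and \eqref{vkinmod}.
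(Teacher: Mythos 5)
Your proposal is correct and follows essentially the same route as the paper: both establish $\tau_{pq}u = U_q$ from the vanishing of the kinematic relative velocity of $U_q$ with respect to $u$ (the paper via $g(U_q,\tau_{pq}u)^2=1$ in \eqref{vkinmod} plus the equality case for unit future-pointing timelike vectors, you via proportionality from \eqref{vkin} plus unit-ness and time-orientation), and then substitute into \eqref{vkinmod} and \eqref{vusual}. The extra care you take over the sign ambiguity and the preservation of future-directedness under parallel transport is the same point the paper handles with ``because they are timelike and future-pointing.''
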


\begin{proof}
Since $U$ is a congruence of kinematically comoving observers, the kinematic relative velocity of $U_q$ with respect to $u$ is zero, and so, by \eqref{vkinmod} we have that $g\left( U_q,\tau _{pq}u\right) ^2=1$. Hence, $g\left( U_q,\tau _{pq}u\right) =-1$ because they are timelike and future-pointing, and then $\tau _{pq}u=U_q$, because they are also unit. Finally, by \eqref{vusual} and \eqref{vkinmod} we have
$
\left\Vert v_{\mathrm{kin}}\right\Vert ^2 = 1-\frac{1}{g\left( u',U_q\right) ^2} = \left\Vert v\right\Vert ^2
$
and so, \eqref{eqtheo1} holds.
\end{proof}

Theorem \ref{theo1} can be also expanded to spectroscopic relative velocity taking into account a congruence of \textit{spectroscopically comoving} observers and the past-pointing horismos submanifold $E^-_p$ (see \cite{Beem81,Bol07}) instead of $L_{p,u}$; but stationary observers are not spectroscopically comoving between them in Schwarzschild spacetime, and so, we can not apply it in our case. On the other hand, this result does not hold in general for Fermi or astrometric relative velocities: for example, the congruence of stationary observers is also \textit{Fermi comoving} and \textit{astrometrically comoving} (see Section \ref{sec:stp}), but the modulus of the Fermi or the astrometric relative velocity of a test particle depends on the chosen stationary observer (see Figure \ref{velschw}).

All these facts and the results obtained in Section \ref{sec4} (specially in Section \ref{sec:circ}, where it is shown that the modulus of the kinematic relative velocity of a test particle with circular geodesic orbit with respect to a stationary observer remains constant) leads to the conclusion that the kinematic relative velocity can be interpreted as the most ``Newtonian-like'' velocity of the four geometric velocities introduced in \cite{Bol07}.



\end{document}